\documentclass[9pt,conference]{IEEEtran}
\IEEEoverridecommandlockouts
\usepackage{cite}

\usepackage{amsmath,amssymb,amsfonts}

\usepackage{graphicx}
\usepackage{textcomp}
\usepackage{xcolor}
\def\BibTeX{{\rm B\kern-.05em{\sc i\kern-.025em b}\kern-.08em
    T\kern-.1667em\lower.7ex\hbox{E}\kern-.125emX}}

\usepackage{booktabs}
\usepackage{multicol}
\usepackage{lipsum}

\usepackage[inline]{enumitem}

\usepackage[switch]{lineno}
\usepackage[noend]{algpseudocode}
\usepackage{algorithm}
\usepackage{comment}
\usepackage{upgreek}

\usepackage{url}

\usepackage{amsmath}
\usepackage{amsthm}
\usepackage{dsfont}
\usepackage{thmtools}
\usepackage{amssymb}
\usepackage{xcolor}

\def\ba{\boldsymbol{a}}


\def\balpha{\boldsymbol{\alpha}}

\def\bGamma{\boldsymbol{\Gamma}}

\def\btheta{\boldsymbol{\theta}}

\def\bnu{\boldsymbol{\nu}}

\def\baleph{\boldsymbol{\aleph}}

\def\bSigma{\boldsymbol{\Sigma}}

\def\bUpsilon{\boldsymbol{\Upsilon}}

\def\bPhi{\boldsymbol{\Phi}}


\def\mbb{\mathbf{b}}

\def\mbe{\mathbf{e}}

\def\mbh{\mathbf{h}}

\def\mbp{\mathbf{p}}

\def\mbr{\mathbf{r}}
\def\mbs{\mathbf{s}}

\def\mbv{\mathbf{v}}
\def\mbw{\mathbf{w}}
\def\mbx{\mathbf{x}}
\def\mby{\mathbf{y}}
\def\mbz{\mathbf{z}}

\def\mbA{\mathbf{A}}
\def\mbB{\mathbf{B}}
\def\mbC{\mathbf{C}}
\def\mbD{\mathbf{D}}
\def\mbE{\mathbf{E}}

\def\mbG{\mathbf{G}}
\def\mbH{\mathbf{H}}
\def\mbI{\mathbf{I}}

\def\mbK{\mathbf{K}}

\def\mbN{\mathbf{N}}
\def\mbO{\mathbf{O}}
\def\mbP{\mathbf{P}}
\def\mbQ{\mathbf{Q}}
\def\mbR{\mathbf{R}}
\def\mbS{\mathbf{S}}

\def\mbU{\mathbf{U}}

\def\mbW{\mathbf{W}}
\def\mbX{\mathbf{X}}
\def\mbY{\mathbf{Y}}


\def\bzero{\boldsymbol{0}}


\def\Tr#1{\mathrm{Tr}\left(#1\right)}
\def\vec#1{\mathrm{vec}\left(#1\right)}

\def\Diag#1{\mathrm{Diag}\left(#1\right)}


\def\Re#1{\operatorname{Re}\left(#1\right)}
\def\Im#1{\operatorname{Im}\left(#1\right)}
\def\arg#1{\operatorname{arg}\left(#1\right)}



\newtheorem{theorem}{Theorem}
\newtheorem{proposition}{Proposition}
\newtheorem{lemma}{Lemma}

\newtheorem{remark}{Remark}




\def\PRI{{\textrm{PRI}}}
\def\T{\top}
\def\H{\mathrm{H}}
\def\j{\mathrm{j}}

\def\PRI{{\textrm{PRI}}}
\def\GLR{{\textrm{GLR}}}



\begin{document}

\title{ Moving Target Detection via Multi-IRS-Aided OFDM Radar}
\author{Zahra Esmaeilbeig, Arian Eamaz, Kumar Vijay Mishra and Mojtaba Soltanalian
\thanks{Zahra Esmaeilbeig, Arian Eamaz and Mojtaba Soltanalian are with the ECE Departement, University of Illinois at Chicago, Chicago, IL 60607 USA. Email: \{zesmae2, aeamaz2, msol\}@uic.edu.}
\thanks{Kumar Vijay Mishra is with the United States DEVCOM Army Research Laboratory, Adelphi, MD 20783 USA. E-mail: kvm@ieee.org.}
\thanks{This work was sponsored in part by the National Science Foundation Grant  ECCS-1809225, and in part by the Army Research Office, accomplished under Grant Number W911NF-22-1-0263. The views and conclusions contained in this document are those of the authors and should not be interpreted as representing the official policies, either expressed or implied, of the Army Research Office or the U.S. Government. The U.S. Government is authorized to reproduce and distribute reprints for
Government purposes notwithstanding any copyright notation herein.}}
\maketitle
\begin{abstract}
 An intelligent reflecting surface (IRS) consists of passive reflective elements capable of altering impinging waveforms. The IRS-aided radar systems have recently been shown to improve detection and estimation performance by exploiting the target information collected via non-line-of-sight paths. However, the waveform design problem for an IRS-aided radar has remained 
 relatively unexplored. In this paper, we consider a multi-IRS-aided orthogonal frequency-division multiplexing (OFDM) radar and study the theoretically achievable accuracy of target detection. In addition, we jointly design the OFDM signal and IRS phase-shifts to optimize the target detection performance via an alternating optimization approach. To this end, we formulate the IRS phase-shift design problem as a unimodular \emph{bi-quadratic} program which is tackled by a computationally cost-effective approach based on power-method-like iterations. Numerical experiments illustrate that our proposed joint design of IRS phase-shifts and the OFDM code improves the detection performance in comparison with  conventional OFDM radar. 
 
\end{abstract}
\begin{IEEEkeywords}
Intelligent reflecting surfaces, non-line-of-sight sensing, OFDM, unimodular bi-quadratic programming, waveform design.
\end{IEEEkeywords}

\setlength{\abovedisplayskip}{3pt}
\setlength{\belowdisplayskip}{3pt}

\section{Introduction}
 Intelligent reflective surface (IRS) is an  emerging technological advancement  for next-generation wireless systems. An IRS comprises meta-material units that enable smart and programmable radio environments by introducing predetermined phase-shifts to the impinging signal~\cite{bjornson2022reconfigurable}. The IRS-aided wireless communications are shown to provide  range extension to users with obstructed direct links \cite{wu2019towards}, 
 enhance physical layer security~\cite{mishra2022optm3sec,zhang2022security}, facilitate unmanned air vehicle (UAV) communications~\cite{Giovanni2021}, and shaping the wireless channel through multi-beam design 
\cite{torkzaban2021shaping}. Recent works have also introduced IRS to integrated communications and sensing systems~\cite{wang2022stars,wei2022irs,mishra2022optm3sec,elbir2022rise}. 

Recently, following the advances in~\cite{esmaeilbeig2022irs,wang2022stars}, IRS-aided sensing for non-line-of-sight (NLoS) target estimation has been investigated in ~\cite{esmaeilbeig2022irs,esmaeilbeig2022cramer,esmaeilbeig2022joint}.
In~\cite{lu2021intelligent}, the phase-shift matrix of the IRS was optimized for collocated MIMO radar to improve the estimation and detection performance. Target detection was also considered in cases where the radar is aided by a single IRS~\cite{buzzi2021radar} or multiple IRS platforms~\cite{buzzi2022foundation,lu2021target}. The deployment of multiple IRS platforms 
is necessary to overcome line-of-sight (LoS) blockage or obstruction in cases where the NLoS path formed by a single IRS is unable to provide the desired coverage. 
To this end, \cite{wang2021joint} jointly designed the radar transmitter and IRS beamformers for a multi-IRS-aided radar. Similar to a conventional radar \cite{radarsignaldesign2022}, a judicious design of transmit waveforms improves the performance of IRS-aided radar. 

In general, radar waveform design is a well-investigated problem~\cite{radarsignaldesign2022,hu2016locating,soltanalian2013joint}. However, it is relatively unexamined for IRS-aided scenarios. Among prior works, \cite{liu2022joint} designed a transmit radar code with constant-modulus for a narrowband IRS-aided radar. However, wideband signaling compensates for signal fading resulting from multipath propagation \cite{sen2010adaptive}. Therefore, very recent works~\cite{wei2022irs,wei2022multi,wei2022simultaneous} investigate wideband waveforms such as orthogonal frequency-division multiplexing (OFDM) signaling to improve detection with IRS-aided radar. 

In this paper, we focus on designing a wideband radar waveform for multi-IRS-aided radar jointly with the IRS phase-shifts.  
In particular, we formulate the detection problem as a hypothesis test to decide the presence of a target in a particular range cell. Then, we jointly design the OFDM signal and the IRS phase-shifts to enhance the receiver operating characteristics (RoC) associated with moving target detection. We adopt noncentrality parameter of the asymptotic distribution of the generalized likelihood ratio test (GLRT) statistic~\cite{sen2010adaptive} as the performance metric for target detection. We demonstrate that  maximizing the noncentrality parameter with respect to the system parameters such as the transmit waveform and phase-shifts of IRS, yields improvement in the probability of detection. Contrary to prior works, wherein only IRS phase-shifts were optimized in an IRS-aided radar~\cite{esmaeilbeig2022irs,lu2021target,buzzi2022foundation}, we show that jointly optimal waveform and phase-shifts increase the probability of detection. Further, our IRS-aided radar outperforms the multipath OFDM radar \cite{sen2010adaptive}  with specular reflection in the exactly identical paths between the target and radar. 


The remainder of this paper is organized as follows. In the next section, we describe the signal model for the multi-IRS-aided OFDM radar. The moving target detector based on GLRT is introduced in Section~\ref{sec:3}. 
We present our joint waveform and IRS phase-shift design in Section~\ref{sec:5}. We validate our model and methods via numerical experiments in Section~\ref{sec:6} and conclude in Section~\ref{sec:7}.  

Throughout this paper, we use bold lowercase and bold uppercase letters for vectors and matrices, respectively. The $(m,n)$-th element of the matrix $\mbB$ is $\left[\mbB\right]_{mn}$. The sets of complex and real numbers are $\mathbb{C}$ and $\mathbb{R}$, respectively;  $(\cdot)^{\top}$, $(\cdot)^{\ast}$ and $(\cdot)^{\mathrm{H}}$ are the vector/matrix transpose, conjugate, and Hermitian transpose, respectively; the trace of a matrix is  $\operatorname{Tr}(.)$; the function $\textrm{diag}(.)$ returns the diagonal elements of the input matrix; and $\textrm{Diag}(.)$  
produces a diagonal/block-diagonal matrix with the same diagonal entries/blocks as its vector/matrices argument. 
The Hadamard (element-wise) and Kronecker products are $\odot$ and $\otimes$, respectively. The vectorized form of a matrix $\mbB$ is written as $\vec{\mbB}$ and the block diagonal vectorization \cite{xu2006block} is denoted by $\operatorname{vecb}(\mbB)$. The $s$-dimensional all-ones vector and the identity matrix of  size $s\times s$ are $\mathbf{1}_{s}$, and $\mbI_s$, respectively. The  minimum eigenvalue of $\mbB$ is  denoted by $\lambda_{min}(\mbB)$. The real, imaginary, and angle/phase components of a complex number are $\Re{\cdot}$, $\Im{\cdot}$, and $\arg{\cdot}$, respectively.  $\mathrm{vec}_{_{K,L}}^{-1}\left(\mbb\right)$ reshapes the input vector $\mbb\in\mathbb{C}^{KL}$ into a  matrix $\mbB\in\mathbb{C}^{K \times L}$ such that $\vec{\mbB}=\mbb$. Also, $\mathbf{0}_{N}$ is the all-zero vector of size $N$. 
The generalized inversion of a matrix $\mbB$ is $\left(\mbB\right)^{-}$. We use $\textrm{Pr}\left(.\right)$ to denote the probability. 
\section{System Model}
\label{sec:2}
Consider a multi-IRS-aided radar with transmitter and receiver  located at $\uprho_{_r}=[0,0]^T$ in the two-dimensional (2-D) Cartesian coordinate system. The radar transmits an OFDM signal with the bandwidth $B$ Hz consisting of  $L$ subcarriers 
as 
\begin{equation}\label{eq:modell}
s_{_{\textrm{OFDM}}}(t)=\sum_{l=0}^{L-1}a_l e^{\j2\pi f_l t},  \quad  0\leq t \leq T,
\end{equation}
where $a_l$ is the waveform code, $f_l=f_c+l\Delta_f$ denotes the $l$-th subcarrier  frequency and the subcarrier spacing is chosen as $\Delta_f=B/(L+1)=1/T$ to guarantee the orthogonality of the subcarriers. The vector $\ba=[a_1,\ldots,a_L]^{\T}$ collects the OFDM coefficients of all subcarriers for which we have  $\|\ba\|_2^2=1$. The pulses in~\eqref{eq:modell} are transmitted with the pulse repetition interval (PRI) $T_{\PRI}$.

The $M$ IRS platforms denoted as IRS$_{1}$, IRS$_{2}$,..., IRS$_{M}$ are  installed at stationary known locations (Fig.~\ref{fig::4}). Each IRS is a uniform linear array (ULA) with $N_m$ reflecting elements, and with an inter-element spacing of $d$.  The first element of IRS$_{m}$ is located at a known coordinate $\uprho_{_i}^{(m)}=[x_i^{(m)},y_i^{(m)}]^T$. The space-frequency steering vector of the $m$-th IRS is
$\mbb_m(\btheta,f_l)= \left[1,e^{\textrm{j}\frac{2\pi f_l}{c}d sin \theta},\ldots,e^{\textrm{j}\frac{2\pi f_l}{c}d(N_m-1) sin \theta}\right]^{\top}$, where $c$ is the speed of light, $f_l$ is the subcarrier 
frequency, $d$ is the half-wavelength Nyquist spacing and $\theta$ is the direction of the impinging wavefront at the  ULA. Each reflecting element of IRS$_m$ reflects the incident signal with a phase-shift change that is configured via a smart controller~\cite{bjornson2022reconfigurable}.   We denote the phase-shift vector  of IRS$_m$ by $\mbv_m=[e^{\textrm{j}\phi_{_{m,1}}},\ldots,e^{\textrm{j}\phi_{_{m,N_m}}}]^\T \in \mathbb{C}^{N_m}$, 
where $\phi_{_{m,k}}\in[0,2\pi)$ is the phase-shift associated with the $k$-th passive element of  IRS$_m$. Clearly, $\mbv_m$ is   a unimodular vector chosen from the set $\Omega^{N_m}$, where $\Omega = \left\{s \in \mathbb{C}| \mbs= e^{\textrm{j}\omega_{i}}, \omega_i \in [0,2\pi)\right\}$.


Assume a target at $\uprho_{_t}=[x_t,y_t]^{\T}$ moving with velocity $\upnu_t=[\upnu_x,\upnu_y]^{\T}$. In the LoS path, the target is characterized by its Doppler shift and time delay given by, respectively, 
\begin{align}\label{eq::doppler-LOS}
\nu_{0}&=\frac{1}{c}\frac{2 \upnu_t^{\T}(\uprho_{_t}-\uprho_{_r})}{d_{tr}},
\end{align}
and
\begin{align}
\tau_{0}&=\frac{2d_{tr}}{c},
\end{align}
where $d_{tr}=\|\uprho_{_{t}}-\uprho_{_{r}}\|_2$ is the distance between the radar and target. The IRS deployment also yields $M$ non-line-of-sight (NLoS) paths from the  target to the radar. The Doppler shift and time delay in the  radar-IRS$_m$-target-IRS$_m$-radar path are, respectively,
\begin{align}\label{eq::doppler-NLOS}
\nu_{m}&=\frac{1}{c}\frac{2\upnu_t^{\T}(\uprho_{_t}-\uprho^{(m)}_{_{i}})}{d^{(m)}_{it}},
\end{align}
and
\begin{align}
\tau_{m}&=2\frac{d^{(m)}_{ri}+d^{(m)}_{it}}{c},  
\end{align}
for $m=1,\ldots,M$, where 
$d^{(m)}_{ri}=\|\uprho^{(m)}_{_{i}}-\uprho_{_{r}}\|_2$ and
$d^{(m)}_{it}=\|\uprho_{_t}-\uprho^{(m)}_{_{i}}\|_2$
are the radar-IRS$_m$ and target-IRS$_m$ distances, respectively. 


We make the following assumptions about the IRS-aided OFDM radar and target parameters: 
\begin{description}
    \item[A1] ``Bandwidth-invariant Doppler'': The bandwidth of OFDM signal is much smaller than the carrier frequency, i.e., $B \ll f_c$.  Hence, the phase-shifts arising from the Doppler effect are identical over all subcarriers.

    \item[A2] ``Slow Target'': The Doppler frequency of the target does not change during one coherent processing interval (CPI) i.e.  $\nu_m << \frac{1}{NT_{_{\PRI}}B}$. Therefore, the  following piecewise-constant approximation holds $\nu_m t \approx \nu_m n T_{_{\PRI}} $ , for $t \in [nT_{_{\PRI}},(n+1)T_{_{\PRI}}]$. 

    \item[A3] ``Narrow surveillance area'': The radar is deployed in a region, where the range of the target is much greater than the width or cross-range extent of the surveillance area. The relative time gaps between any two signals received from NLoS paths are very small in comparison to the LoS round trip delays, i.e., $\tau_{m} \approx \tau_0 =\frac{2 d_{tr}}{c}$  for $m \in \{1,\ldots,M\}$. 

    \item[A4] ``Frequency-invariant IRS phase-shift'': The IRS platforms impose the same phaseshifts over  all subcarrier frequencies  and  therefore the IRS phase-shift matrix is not indexed over different frequencies, i.e., $\bPhi_m(f_l)=\bPhi_m$, for $l \in \{0,\ldots,L-1\}$ and $m \in \{1,\ldots,M\}$.
    
    
    \item [A5] ``Inter-IRS interference'': The mutual interference between various IRS platforms is negligible. In other words, the interference  caused by reflections in the radar-IRS$_m$-target-IRS$_{m'}$-radar path for $m \neq m'$ is insignificant because IRS is a passive reflector and the reflections in non-beamformed directions are weaker.
\end{description}
\begin{figure}[t]
	\centering
    \includegraphics[width=1\columnwidth]{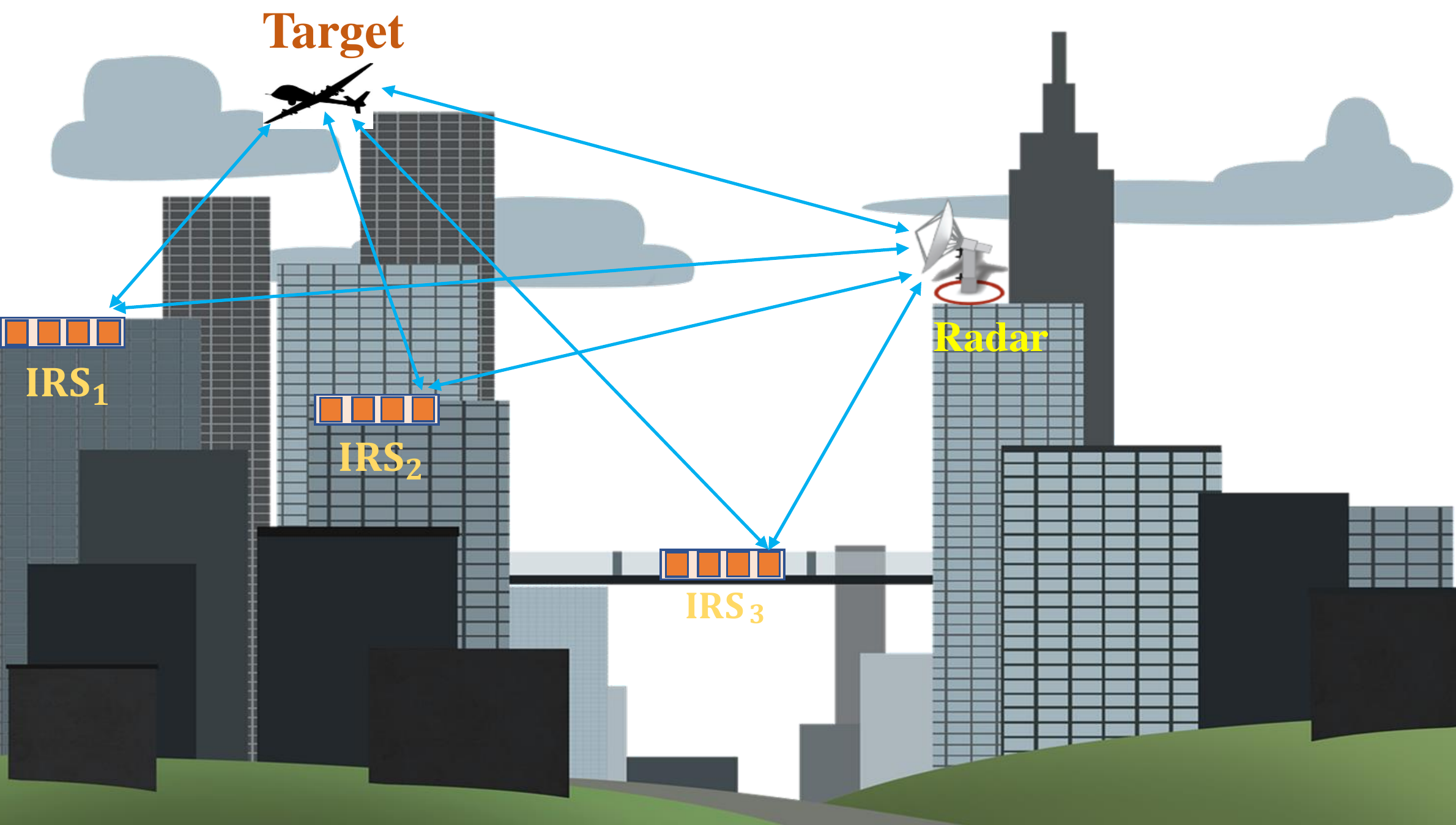}
	\caption{\textcolor{black}{A simplified illustration of various NLoS or virtual LoS links provided by multiple IRS platforms mounted on urban infrastructure between the radar and the hidden moving target.}
	}
	\label{fig::4}
\end{figure}



Define the NLoS  channel  along the  $l$-th subcarrier and $m$-th path as\par\noindent\small
\begin{equation}\label{eq:1}
h_{_{lm}}=\mbb(\theta_{ir,m},f_l)^{\T}\bPhi_m \mbb(\theta_{ti,m},f_l)\mbb(\theta_{it,m},f_l)^{\T}\bPhi_m \mbb(\theta_{ri,m},f_l),    
\end{equation}\normalsize
for  $m>1$ and  $h_{l0}$ is the  LoS channel~\cite{sen2010adaptive}. We  define  $\bPhi_m=\Diag{\mbv_m}$ as diagonalization of the 1-D phase-shift vector of IRS$_m$. 
Assume the LoS path between  radar and target is obstructed, i.e., $h_{l0} \approx 0$. The signal reflected from a Swerling-0~\cite{skolnik2008radar} target with   $\alpha_{_{lm}}$ as the  complex  reflectivity/amplitude along the  $l$-th subchannel and $m$-th path is a delayed, modulated and scaled version of the transmit signal in~\eqref{eq:modell} as \par\noindent\small
\begin{align}\label{eq:y_l}
y_{l}(t)&=\sum_{m=0}^{M} a_l h_{_{l m}} \alpha_{lm} e^{\j2\pi l \Delta_f (1+\nu_m)(t-\tau_m)}\nonumber\\
&\times e^{-\j 2\pi f_c (1+\nu_m)\tau_m} e^{\j2\pi f_c \nu_m t} e^{\j 2\pi f_c t}+w_{l}(t),  
\end{align}\normalsize 
where 
the signal independent interference (noise) for the  $l$-th subcarrier is denoted by $w_{l}(t)$. 


We collect $N$ samples from the signal, at $t=\tau_0+n T_{_{\PRI}}$, $n=0,\ldots,N-1$. By applying  $\nu_m t \approx \nu_m n T_{_{\PRI}} $ (\textbf{A2}) and  $\tau_{m} \approx \tau_0$ (\textbf{A3}) to~\eqref{eq:y_l}, the discrete-time received signal corresponding to  the  range-cell of interest  is
\begin{align}
y_l[n]&= \sum_{m=0}^{M} a_l h_{_{l m}} \alpha_{_{l m}} p_{_{l }}(n,\nu_m)+w_{l}[n],
\end{align}\normalsize
where, $p_{_{l}}(n,\nu_m)=e^{-\j2\pi f_l\tau_0} e^{\j 2\pi f_l \nu_{m}n T_{_{\PRI}}}$ contains the unknown target delay and Doppler.

We stack measurements from all L subchannels to obtain the $L \times 1$ vector 
\begin{equation}
\mby[n]=\Diag{\ba}\mbX\mbp(n,\bnu)+\mbw[n],     
\end{equation}\normalsize
where the 
Doppler steering vector is   $\mbp(n,\bnu)=[\mbp_{_{0}}(n,\bnu)^{\T},\ldots , \mbp_{_{L-1}}(n,\bnu)^{\T}]^{\T}$, with $\mbp_{_{l}}(n,\bnu)=[p_{_{l}}(n,\nu_{_0}),\ldots,p_{_{l}}(n,\nu_{_M})]^{\T}$, and  the  $L \times 1$ noise  vectors  is $\mbw[n]=[w_{0}[n],\ldots,w_{L-1}[n]]^{\T}$. 
Stacking all $N$ temporal measurements, the $L \times N$ OFDM received signal matrix is
\begin{equation}
\label{Neg1000}
\mbY_{_{\textrm{OFDM}}}=\mbA\mbX\mbP(\bnu)+\mbN ,   
\end{equation}
where $\mbA=\Diag{\ba}$, $\mbN=[\mbw[0],\ldots,\mbw[N-1]]^{\T}$ and the Doppler information of the target  is collected in  
\begin{equation}
\mbP(\bnu)=[\mbp(0,\bnu),\ldots,\mbp(N-1,\bnu)],
\end{equation} 
and\par\noindent\small
\begin{align}
\mbX&= \mbD \odot \mbH, \label{eq:11}\\
\mbD&=\Diag{\balpha_{_{0}}^{\T},\ldots,\balpha_{_{L-1}}^{\T}},\label{eq:12}\\
\balpha_{_{l}}&=[\alpha_{_{l1}},\ldots,\alpha_{_{lM}}]^{\T},\label{eq:13}\\
\mbH&=\Diag{\mbh_{_{0}}^{\T},\ldots,\mbh_{_{L-1}}^{\T}},\label{eq:14}\\
\mbh_{l}&=[h_{_{l1}},\ldots,h_{_{lM}}]^{\T},\label{eq:15}
\end{align}\normalsize
We assume the noise
is from complex zero-mean Gaussian distribution and correlated with a  positive-definite covariance $\bSigma$. The columns of $\mbN$ are assumed to be (independent and identically distributed) i.i.d. Then, OFDM measurements are  distributed as 
\begin{equation}
\mbY_{_{\textrm{OFDM}}}\sim \mathcal{CN}(\mbA\mbX\mbP(\bnu),\mbI_{N} \otimes \bSigma)
\end{equation}
where $\mbI_{N} \otimes \bSigma$ is the covariance of the temporally white noise.
Our goal is to design a waveform that maximizes the detection of a moving target located at a given range. 

\section{Target Detection}
\label{sec:3}


In order  to decide whether a 
target is present in a particular known range-cell, we perform binary hypothesis testing between $\mathcal{H}_0$ (target-free hypothesis) and $\mathcal{H}_1$ (target-present hypothesis), that is 
\begin{align}
\label{ragen2}
\mathcal{H}_0:  & \quad \mbY_{_{\textrm{OFDM}}}=\mbN,  \\
\mathcal{H}_1:  &  \quad \mbY_{_{\textrm{OFDM}}}=\mbA\mbX\mbP(\bnu)+\mbN .
\end{align}

    The likelihood ratio is \cite{bickel2015mathematical,sen2010adaptive} 
\begin{equation}
\label{ragen1}
\mathcal{L}\left(\mbY_{\text{OFDM}};\bnu\right)=\frac{f_{\mathcal{H}_1}\left(\mbY_{\text{OFDM}} ; \bnu, \mbX, \bSigma_1\right)}{f_{\mathcal{H}_0}\left(\mbY_{\text{OFDM }}; \bSigma_0\right)} ,
\end{equation}\normalsize
where $f_{\mathcal{H}_0}$ and $f_{\mathcal{H}_1}$ are the likelihood functions under $\mathcal{H}_0$ and $\mathcal{H}_1$, respectively and $\bnu$ is the  Doppler frequency under test. Since the $\bSigma$ and target parameters are unknown, we employ a generalized likelihood ratio test (GLRT) by replacing the unknowns with their maximum likelihood estimates (MLEs) in the $\mathcal{L}\left(\mbY_{\text{OFDM}};\bnu\right)$ 
to obtain the GLRT for our detection problem \eqref{ragen2} as 
\par\noindent\small
\begin{equation}
\label{Neg100}
\mathcal{T}_{_{\text{GLR}}}=\frac{f_{\mathcal{H}_1}\left(\mbY_{\text{OFDM}} ; \bnu, \widehat{\mbX}, \widehat{\bSigma}_1\right)}{f_{\mathcal{H}_0}\left(\mbY_{\text{OFDM }}; \widehat{\bSigma}_0\right)} \quad \underset{\mathcal{H}_2}{\overset{\mathcal{H}_1}{\gtrless}} \gamma,
\end{equation}\normalsize   
where $\widehat{\bSigma}_0$ and $\widehat{\bSigma}_1$ are the MLEs of $\boldsymbol{\bSigma}$ under $\mathcal{H}_0$ and $\mathcal{H}_1, \widehat{\mbX}$ is the MLE of $\mbX$ under $\mathcal{H}_1$, and $\gamma$ is the detection threshold~\cite{bickel2015mathematical,sen2010adaptive}.
The MLEs of unknown covariance matrices are 
\begin{align}\label{Neg101}
\widehat{\bSigma}_1&=\frac{1}{N} \left(\mbY_{\text{OFDM}}-\mbA\mbX\mbP(\bnu)\right)^{\H}\left(\mbY_{\text{OFDM}}-\mbA\mbX\mbP(\bnu)\right),\\
\widehat{\bSigma}_0&=\frac{1}{N} \mbY_{\text{OFDM}}^{\H}\mbY_{\text{OFDM}}.
\end{align}\normalsize 
Therefore,
 In the Gaussian noise scenario, the GLR becomes
 \par\noindent\small
\begin{equation}\label{Neg102}
\mathcal{T}_{_{\text{GLR}}}=\frac{\operatorname{det}\left(\mbY_{\text{OFDM}}^{\H}\mbY_{\text{OFDM}}\right)}{\operatorname{det}\left(\left(\mbY_{\text{OFDM}}-\mbA\mbX\mbP(\bnu)\right)^{\H}\left(\mbY_{\text{OFDM}}-\mbA\mbX\mbP(\bnu)\right)\right)}.
\end{equation}\normalsize
It follows from the Wilk's theroem~\cite{kay2009fundamentals} that, as $N \rightarrow \infty$, the GLRT statistic under   $\mathcal{H}_0$ asymptotically garners a complex chi-square (central) distribution, i.e., 
\begin{equation}
N \ln \mathcal{T}_{_{\GLR}} \sim \mathbb{C} \chi_{_{r L}}^2,
\end{equation}
and under   $\mathcal{H}_1$,
\begin{equation}
N \ln \mathcal{T}_{_{\GLR}} \sim \mathbb{C} \chi_{_{r L}}^{^2}(\delta),
\end{equation}
where $r=\operatorname{rank}\left(\mbP(\bnu)\right)$, $rL$ is degrees of freedom, and $\delta$ is the noncentrality parameter of chi-square distribution obtained in \cite{sen2010adaptive}  as
\begin{equation}\label{Neg109}
\delta=\Tr{\bSigma^{-1}\mbA\mbX\mbP(\bnu)\mbP^{\H}(\bnu)\mbX^{\H}\mbA^{\H}}.
\end{equation}
We have the probability of false alarm ($P_{_{\textrm{FA}}}$) and probability of  detection ($P_{_{\textrm{D}}}$) as
\par\noindent\small
\begin{align}\label{eq:probab}
P_{_{\textrm{FA}}}=\textrm{Pr}&\left(\mathcal{T}_{_{\GLR}} > \gamma | \mathcal{H}_0\right)=Q_{_{rL}}(\gamma) \nonumber\\
P_{_{\textrm{D}}}=\textrm{Pr}&\left(\mathcal{T}_{_{\GLR}} > \gamma | \mathcal{H}_1 \right)=Q^{M}_{\frac{rL}{2}}(\sqrt{\delta},\gamma)=Q^{M}_{_{\frac{rL}{2}}}\left(\sqrt{\delta},Q_{_{rL}}^{-1}(P_{_{\textrm{FA}}})\right)
\end{align}\normalsize
where $1-Q_{_{rL}}(.)$ is the cumulative distribution function (CDF) of chi-squared distribution and $1-Q^{M}_{_{\frac{rL}{2}}}(.,.)$
is the Marcum Q-function accounting for the   CDF of non-central chi-squared distribution~\cite{kay2009fundamentals,shynk2012probability}. The Marcum Q-function
is strictly increasing in $\delta$. Thus, to maximize $P_{\textrm{D}}$, we maximize $\delta=f(\ba,\mbv)$ with respect to parameters such as the OFDM coefficients $\ba$ and vector of all IRS phase-shifts, $\mbv=\left[\mbv^{\top}_{1},\mbv^{\top}_{2}, \cdots ,\mbv^{\top}_{M}\right]^{\top}\in \mathbb{C}^{MN_{m}}$. 
We define the \emph{signal-to-noise ratio} (SNR) matrix  $\mbA\mbX\mbP(\bnu)\mbP^{\H}(\bnu)\mbX^{\H}\mbA^{\H}$ and its trace as in~\eqref{Neg109}, as the \emph{SNR metric}\cite{bickel2015mathematical,sen2010adaptive}.

\nocite{ardeshiri2021adaptive}
\section{Joint Waveform and IRS Phase-Shift Design}\label{sec:5}
As discussed earlier in Section~\ref{sec:3}, the probability of detection for a given probability of false alarm is a monotonically increasing function of the \emph{SNR}. Hence, the design problem may also be formulated to maximize the SNR with respect to the system parameters. The  joint  waveform and IRS  phase-shift design  problem is\par\noindent\small
\begin{align}\label{eq:main_opt}
&\underset{\mbv,\ba}
{\textrm{maximize}} \quad f(\ba,\mbv)\nonumber\\
&\textrm{subject to} \quad \ba^{H}\ba=1.
\end{align}\normalsize
We resort to a task-specific cyclic algorithm, wherein we cyclically optimize~\eqref{eq:main_opt} for $\ba$ and $\mbv$ \cite{tang2020polyphase,soltanalian2013joint}. 
To ensure the unimodularity constraint of IRS phase-shift, we need a projection-based optimization method.
\subsection{OFDM Waveform Design}
Problem~\eqref{eq:main_opt}  with respect to $\ba$ is recast as\par\noindent\small
\begin{align}\label{eq:opt1}
\mathcal{P}_{1}:\;\;\; & \underset{\ba}{\textrm{maximize}}\quad \ba^{\H}\left[\left(\mbX\mbP(\bnu)\mbP^{\H}(\bnu)\mbX^{\H}\right)^{\T}\odot\bSigma^{-1}\right] \ba\nonumber\\
&\textrm{subject to} \quad \ba^{H}\ba=1.
\end{align}\normalsize
The optimal OFDM coefficient $\ba$ is the eigenvector  corresponding to the dominant eigenvalue of $\left(\mbX\mbP(\bnu)\mbP^{\H}(\bnu)\mbX^{\H}\right)^{\T}\odot \bSigma^{-1}$, evaluated based on the \emph{power method}\cite{van1996matrix} at each iteration $s$ as follows:\par\noindent\small
\begin{equation}
\label{Neg_115} 
\ba^{(s+1)}=\frac{\left[\left(\mbX\mbP(\bnu)\mbP^{\H}(\bnu)\mbX^{\H}\right)^{\T}\odot \bSigma^{-1}\right]\ba^{(s)}}{\left\|\left [\left(\mbX\mbP(\bnu)\mbP^{\H}(\bnu)\mbX^{\H}\right)^{\T}\odot \bSigma^{-1}\right ]\ba^{(s)}\right\|_{2}},\; s \geq 0.
\end{equation}\normalsize
\subsection{IRS Beamforming Design}
To design the phase-shift parameters, we maximize the SNR metric with respect to $\mbv$: \par\noindent\small
\begin{align}\label{eq:opt2}
\mathcal{P}_{2}:\;\;\;\; \underset{\mbv}
{\textrm{maximize}}  &\quad f(\ba,\mbv).
\end{align}\normalsize
In what follows, we show that $\mathcal{P}_2$ can be written as a unimodular quadratic programming (UQP) problem. To tackle $\mathcal{P}_{2}$ with respect to $\mbv$, we adopt a computationally efficient procedure of the \emph{power-method-like iterations (PMLI)} algorithm~\cite{soltanalian2014designing}.
This method closely resembles the widely used power method for computing the dominant eigenvalue/vector pairs of matrices.


The UQP is defined as
\begin{equation}
\label{eq:UQP}
\underset{\mbs \in \Omega^{n}}{\textrm{maximize}} \quad\mbs^{\H} \mbG \mbs.
\end{equation}
The sequence of  unimodular vectors at the $t$-th PMLI iteration is
\begin{equation}
\label{eq:UQP_it}
\mbs^{(t+1)}=e^{\textrm{j}\arg{\mbG\mbs^{(t)}}},
\end{equation}
leads to a monotonically increasing objective value for the UQP, when $\mbG$ is a positive semidefinite matrix~\cite{soltanalian2014designing,hu2016locating}.

The following Lemma~\ref{lem:lem1} states the required transformations of~\eqref{eq:opt2} to facilitate the application of the PMLI approach.
\begin{lemma}\label{lem:lem1} Define $\mbC=\left[\begin{array}{c|c|c}\bUpsilon_1&\cdots& \bUpsilon_L\end{array}\right]$,$\bUpsilon_l=\sum^{LM}_{i=1} \mbC_i (\baleph_{l}^{\T}\mbE_i^{\T}\otimes \mbI)$, $\mbC_i=\mbe_i \otimes \mbI_{_L}$, $\mbe_i$ as  a $LM \times 1 $ vector whose $i$-th element is unity and remaining elements are zero, $\baleph_{l}$ as a $L \times L$ matrix with $[\baleph]_{_{ll}}=1$ and zero everywhere else, and $\mbE_i=\textrm{vec}^{-1}_{_{(M,L)}}\left(\mbe_i\right)$. 
Then, 
\begin{equation}
\vec{\mbH}=\mbC \mbh,
\end{equation}
\end{lemma}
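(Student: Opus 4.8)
The plan is to exploit the block-diagonal structure of $\mbH$ so that the claimed identity reduces to a single application of the standard rule $\vec{\mbA\mbB\mbD}=(\mbD^{\T}\otimes\mbA)\vec{\mbB}$. Throughout, let $\mbe_k^{(p)}$ denote the $k$-th canonical basis vector of $\mathbb{R}^{p}$; in the notation of the statement $\baleph_l=\mbe_l^{(L)}(\mbe_l^{(L)})^{\T}$ and $\mbC_i=\mbe_i^{(LM)}\otimes\mbI_L$. The first step is to observe that, since the $1\times M$ row block $\mbh_l^{\T}$ occupies the $l$-th diagonal block of $\mbH=\Diag{\mbh_0^{\T},\ldots,\mbh_{L-1}^{\T}}$,
\begin{equation}
\mbH=\sum_{l=1}^{L}\mbe_l^{(L)}\,\mbh_l^{\T}\bigl((\mbe_l^{(L)})^{\T}\otimes\mbI_M\bigr),
\end{equation}
where $\mbe_l^{(L)}\mbh_l^{\T}$ puts $\mbh_l^{\T}$ in row $l$ and the right factor $(\mbe_l^{(L)})^{\T}\otimes\mbI_M\in\mathbb{R}^{M\times LM}$ slides those $M$ entries into the $l$-th length-$M$ column block. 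This decomposition is immediate from~\eqref{eq:14}--\eqref{eq:15}.

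Next, I would vectorize term by term. Applying $\vec{\mbA\mbB\mbD}=(\mbD^{\T}\otimes\mbA)\vec{\mbB}$ with $\mbA=\mbe_l^{(L)}$, $\mbB=\mbh_l^{\T}$ and $\mbD=(\mbe_l^{(L)})^{\T}\otimes\mbI_M$, together with $\vec{\mbh_l^{\T}}=\mbh_l$ and $\bigl((\mbe_l^{(L)})^{\T}\otimes\mbI_M\bigr)^{\T}=\mbe_l^{(L)}\otimes\mbI_M$, gives
\begin{equation}
\vec{\mbH}=\sum_{l=1}^{L}\Bigl[\bigl(\mbe_l^{(L)}\otimes\mbI_M\bigr)\otimes\mbe_l^{(L)}\Bigr]\mbh_l .
\end{equation}
Because $\mbh=\operatorname{vecb}(\mbH)=\bigl[\mbh_1^{\T},\ldots,\mbh_L^{\T}\bigr]^{\T}$, this is exactly $\vec{\mbH}=\mbC\mbh$ once the $l$-th column block of $\mbC$ is identified with $\bUpsilon_l=(\mbe_l^{(L)}\otimes\mbI_M)\otimes\mbe_l^{(L)}$. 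The remaining work is to check that the explicit sum in the statement collapses to this Kronecker triple: writing $\mbE_i=\mathrm{vec}^{-1}_{(M,L)}(\mbe_i^{(LM)})=\mbe_{r}^{(M)}(\mbe_{c}^{(L)})^{\T}$ with $c=\lceil i/M\rceil$ and $r=i-(c-1)M$, one gets $\baleph_l^{\T}\mbE_i^{\T}=\mbe_l^{(L)}(\mbe_l^{(L)})^{\T}\mbe_c^{(L)}(\mbe_r^{(M)})^{\T}=\delta_{lc}\,\mbe_l^{(L)}(\mbe_r^{(M)})^{\T}$, so only the indices $i=(l-1)M+r$, $r=1,\ldots,M$, survive; then using $\mbe_{(l-1)M+r}^{(LM)}=\mbe_l^{(L)}\otimes\mbe_r^{(M)}$ and the mixed-product rule, the surviving summands add up columnwise to $(\mbe_l^{(L)}\otimes\mbI_M)\otimes\mbe_l^{(L)}$, as required.

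I expect the last step --- matching the compact form with the index-heavy sum in the statement --- to be the only delicate part, since one must keep three bookkeeping indices straight (the subcarrier/block index $l$, the within-block path index that reconstructs $\mbI_M$, and the Kronecker ``slot'' index introduced by $\mbC_i$) and track how $\mathrm{vec}^{-1}_{(M,L)}$ interleaves them. If that symbolic reduction becomes cumbersome, I would instead establish $\vec{\mbH}=\mbC\mbh$ entrywise: the $\bigl((j-1)L+l\bigr)$-th component of $\vec{\mbH}$ equals $h_{l,\,j-(l-1)M}$ when $(l-1)M<j\le lM$ and is $0$ otherwise, and one verifies that the corresponding row of $\mbC$ is $(\mbe_j^{(LM)})^{\T}$ under exactly the same condition, whence $[\mbC\mbh]_{(j-1)L+l}=[\mbh]_j=h_{l,\,j-(l-1)M}$. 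Everything else is routine linear algebra.
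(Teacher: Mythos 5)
Your proof is correct, and at its core it is the same argument as the paper's: split $\mbH$ into its $L$ diagonal blocks and vectorize each block with the vec--Kronecker identity. The execution differs in a way worth noting, though. The paper writes $\mbH=\sum_{l}\baleph_l\otimes\mbh_l^{\T}$ (your $\mbe_l\mbh_l^{\T}\bigl(\mbe_l^{\T}\otimes\mbI_M\bigr)$ is the same matrix), then expands the vectorization column-by-column through the matrices $\mbC_i$, substitutes $\mbe_i=\vec{\mbE_i}$, and applies the vec identity twice, so the stated sum $\bUpsilon_l=\sum_i\mbC_i(\baleph_l^{\T}\mbE_i^{\T}\otimes\mbI)$ appears directly and never needs to be simplified. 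You instead apply $\vec{\mbA\mbB\mbD}=(\mbD^{\T}\otimes\mbA)\vec{\mbB}$ once to obtain the closed form $\bUpsilon_l=(\mbe_l\otimes\mbI_M)\otimes\mbe_l$ and then collapse the $LM$-term sum of the statement onto it by canonical-basis bookkeeping ($\baleph_l^{\T}\mbE_i^{\T}=\delta_{lc}\,\mbe_l\mbe_r^{\T}$, $\mbe_{(l-1)M+r}=\mbe_l\otimes\mbe_r$); that bookkeeping checks out, e.g.\ columnwise both expressions send $\mbe_r$ to $\mbe_l\otimes\mbe_r\otimes\mbe_l$, which also matches the position $(l-1)ML+(r-1)L+l$ of $[\mbh_l]_r$ in $\vec{\mbH}$. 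Your route costs the extra index-matching step the paper avoids, but it buys an explicit, transparent formula for $\bUpsilon_l$ (plus a quick entrywise sanity check of $\vec{\mbH}=\mbC\mbh$) that the paper never exhibits. One small point both proofs share: for the dimensions in $\bUpsilon_l$ to be consistent, the unspecified $\mbI$ in the lemma must be the $1\times 1$ identity, which is exactly how you treat it implicitly and how the paper treats it via $\vec{\mbh_l^{\T}\mbE_i\baleph_l}$.
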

\begin{proof}Following the  definitions in Lemma.~\ref{lem:lem1}, we have \par\noindent\small
\begin{align}
\vec{\mbH}&=\vec{\Diag{\mbh_1^{\T},\ldots,\mbh_L^{\T}}}=\vec{\sum_{l=1}^{L}\baleph_{l}   \otimes \mbh_l^{\T}}\nonumber\\
&=\sum_{l=1}^{L}\vec{\baleph_{l}  \otimes \mbh_l^{\T}}=\sum_{l=1}^{L} \sum^{LM}_{i=1} \mbC_i \left(\baleph_{l}  \otimes \mbh_l^{\T}\right) \mbe_i \nonumber\nonumber\\
&=\sum_{l=1}^{L}\sum^{LM}_{i=1} \mbC_i  \left(\baleph_{l}  \otimes \mbh_l^{\T}\right)\vec{\mbE_i}\nonumber\\
&=\sum_{l=1}^{L}\sum^{LM}_{i=1} \mbC_i\vec{\mbh_l^{\T}\mbE_i\baleph_{l}}=\sum_{l=1}^{L}\sum^{LM}_{i=1} \mbC_i (\baleph_{l}^{\T}\mbE_i^{\T}\otimes \mbI)\mbh_l\nonumber\\
&=\sum_{l=1}^{L} \bUpsilon_l \mbh_l=\mbC \mbh.
\end{align}\end{proof}\normalsize

Subsequently, we will use Lemma.~\ref{lem:lem1} to propose a quadratic form with respect to $\mbh$ for the SNR metric.
\begin{proposition}
Denote $\mbh=[\mbh_1^{\T},\ldots,\mbh_L^{\T}]^{\T}$, $\mbW=\mbC^{\H}\mbU^{\H}\mathcal{A}\mbU\mbC$ and $\mbU=\Diag{\vec{\mbA\mbD}}$. Then, the SNR metric becomes\par\noindent\small
\begin{equation}\label{eq:opt3}
f(\ba,\mbv)=\mbh^{\H}\mbW\mbh.
\end{equation}\normalsize
\end{proposition}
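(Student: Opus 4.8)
The plan is to start from the noncentrality parameter $\delta = f(\ba,\mbv) = \Tr{\bSigma^{-1}\mbA\mbX\mbP(\bnu)\mbP^{\H}(\bnu)\mbX^{\H}\mbA^{\H}}$ in \eqref{Neg109} and massage it into a quadratic form in $\mbh$ by vectorizing and applying Lemma~\ref{lem:lem1}. First I would use the trace identity $\Tr{\bSigma^{-1}\mbB\mbB^{\H}} = \vec{\mbB}^{\H}(\mbI \otimes \bSigma^{-1})\vec{\mbB}$ with $\mbB = \mbA\mbX\mbP(\bnu)$, so that $f(\ba,\mbv) = \vec{\mbA\mbX\mbP(\bnu)}^{\H}\,\mathcal{A}\,\vec{\mbA\mbX\mbP(\bnu)}$, where $\mathcal{A}$ is the appropriately-sized $\mbI \otimes \bSigma^{-1}$ (or equivalently $\mbP(\bnu)^{\T}\mbP(\bnu)^{\ast}$ woven together with $\bSigma^{-1}$ via a Kronecker product — I would fix the exact form of $\mathcal{A}$ to match the statement). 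The point is that $f$ is a Hermitian quadratic form in the vector $\vec{\mbA\mbX\mbP(\bnu)}$, hence in $\vec{\mbA\mbX}$ once $\mbP(\bnu)$ is absorbed into $\mathcal{A}$.

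Next I would peel off the dependence on $\mbh$. Since $\mbX = \mbD \odot \mbH$ by \eqref{eq:11}, and $\mbA = \Diag{\ba}$ is fixed in this subproblem, I would write $\vec{\mbA\mbX} = \vec{\mbA(\mbD \odot \mbH)}$. Using the identity $\vec{\mbD \odot \mbH} = \Diag{\vec{\mbD}}\vec{\mbH}$ together with the fact that left-multiplication by $\mbA$ acts diagonally, this becomes $\vec{\mbA\mbX} = \Diag{\vec{\mbA\mbD}}\vec{\mbH} = \mbU\vec{\mbH}$ with $\mbU = \Diag{\vec{\mbA\mbD}}$ as defined. Then Lemma~\ref{lem:lem1} gives $\vec{\mbH} = \mbC\mbh$, so $\vec{\mbA\mbX} = \mbU\mbC\mbh$. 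Substituting back, $f(\ba,\mbv) = (\mbU\mbC\mbh)^{\H}\mathcal{A}(\mbU\mbC\mbh) = \mbh^{\H}\mbC^{\H}\mbU^{\H}\mathcal{A}\mbU\mbC\mbh = \mbh^{\H}\mbW\mbh$, which is the claimed identity.

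The main obstacle I anticipate is bookkeeping rather than conceptual: getting the Kronecker/vectorization bracketing exactly right so that $\mathcal{A}$, the dimensions of $\mbU$, $\mbC$, and $\mbh$, and the placement of $\mbP(\bnu)$ all line up consistently — in particular whether $\mbP(\bnu)\mbP^{\H}(\bnu)$ is folded into $\mathcal{A}$ (making $\mathcal{A} = \mbP(\bnu)^{\ast}\mbP(\bnu)^{\T} \otimes \bSigma^{-1}$ up to transpose conventions) or kept separate, and whether the Hadamard-to-diagonal step needs a commutation matrix. I would also need to double-check that $\mbh$ as defined collects only the $M$ NLoS channel coefficients per subcarrier (indices $1,\dots,M$), consistent with the blocked structure of $\mbH$ in \eqref{eq:14}–\eqref{eq:15} and with the dimensions used in Lemma~\ref{lem:lem1}. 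Once the dimensions are nailed down, the derivation is a short chain of standard vectorization identities, and the resulting $\mbW = \mbC^{\H}\mbU^{\H}\mathcal{A}\mbU\mbC$ is manifestly Hermitian positive semidefinite (since $\mathcal{A} \succeq 0$), which is exactly what is needed downstream to guarantee monotone convergence of the PMLI recursion \eqref{eq:UQP_it}.
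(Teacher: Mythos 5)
Your proposal is correct and follows essentially the same route as the paper's proof: vectorize the trace to obtain the quadratic form $\vec{\mbA\mbX}^{\H}\mathcal{A}\,\vec{\mbA\mbX}$ with $\mathcal{A}=\left(\mbP(\bnu)\mbP^{\H}(\bnu)\right)^{\T}\otimes\bSigma^{-1}$, then use the Hadamard--diagonal identity $\vec{\mbA(\mbD\odot\mbH)}=\mbU\vec{\mbH}$ and Lemma~\ref{lem:lem1} to get $\vec{\mbA\mbX}=\mbU\mbC\mbh$ and hence $f(\ba,\mbv)=\mbh^{\H}\mbW\mbh$. The only cosmetic difference is that you absorb $\mbP(\bnu)$ via $\vec{\mbA\mbX\mbP(\bnu)}=(\mbP(\bnu)^{\T}\otimes\mbI)\vec{\mbA\mbX}$, whereas the paper first cycles the trace and then applies the $\vec{\cdot}$ identities, which yields the identical $\mathcal{A}$.
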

\begin{proof}
Assume $\mbB=\mbA\mbX$. We recast the objective in~\eqref{eq:opt2} as \par\noindent\small
\begin{align}\label{eq:opt_recast}
f(\ba,\mbv)
&= \Tr{\mbP^{\H}(\bnu)\mbB^{\H}\bSigma^{-1} \mbB \mbP(\bnu)}\nonumber\\
&=\Tr{\mbB^{\H}\bSigma^{-1}\mbB\mbP(\bnu)\mbP^{\H}(\bnu)}\nonumber\\
&=\vec{\mbB^{\ast}}^{\T} \vec{\bSigma^{-1}\mbB\mbP(\bnu)\mbP^{\H}(\bnu)}\nonumber\\
&=\vec{\mbB}^{\H} \left(\left(\mbP(\bnu)\mbP^{\H}(\bnu)\right)^{\T}\otimes \bSigma^{-1}\right)\vec{\mbB}\nonumber\\
&=\vec{\mbB}^{\H}\mathcal{A} \vec{\mbB},
\end{align}\normalsize
where $ \mathcal{A}=\left(\mbP(\bnu)\mbP^{\H}(\bnu)\right)^{\T}\otimes\bSigma^{-1}$. We have
$\vec{\mbB}=\vec{\mbA\mbX}=\vec{\Diag{\ba}\left(\mbD\odot \mbH\right)}=\vec{\mbH}\odot\vec{\mbA\mbD}=\mbC\mbh\odot\vec{\mbA\mbD}$ and using Lemma.~\ref{lem:lem1}, $\vec{\mbB}=\mbU\mbC\mbh$. Substituting this in~\eqref{eq:opt_recast} completes the proof. 
\end{proof}
We now reformulate the SNR metric as a quartic function in the optimization  parameter $\mbv$.  
\begin{proposition}~\label{prop:2}
The SNR metric is quartic in phase-shifts, i.e.\par\noindent\small
\begin{align}
\label{eq_neg10}
f(\ba,\mbv)&=\mbv^{\H}\mbQ_1(\mbv)^{\H}\mbW \mbQ_1(\mbv)\mbv\nonumber,\\
&=\mbv^{\H}\mbQ_2(\mbv)^{\H}\mbW \mbQ_2(\mbv)\mbv,
\end{align}\normalsize
\end{proposition}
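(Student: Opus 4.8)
The plan is to leverage the fact that each NLoS channel coefficient $h_{lm}$ in~\eqref{eq:1} is a rank-one quadratic form in the phase-shift vector $\mbv_m$, so that the stacked vector $\mbh$ is quadratic in $\mbv=[\mbv_1^\T,\ldots,\mbv_M^\T]^\T$ and, through the identity $f(\ba,\mbv)=\mbh^\H\mbW\mbh$ from~\eqref{eq:opt3}, the SNR metric is quartic (bi-quadratic) in $\mbv$.

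First I would linearize each scalar factor of
$h_{lm}=\bigl(\mbb(\theta_{ir,m},f_l)^\T\bPhi_m\mbb(\theta_{ti,m},f_l)\bigr)\bigl(\mbb(\theta_{it,m},f_l)^\T\bPhi_m\mbb(\theta_{ri,m},f_l)\bigr)$
in $\mbv_m$, using $\bPhi_m=\Diag{\mbv_m}$ together with the elementary identity $\mbx^\T\Diag{\mbv}\mby=\mbv^\T(\mbx\odot\mby)$. Writing $\mbd_{lm}=\mbb(\theta_{ir,m},f_l)\odot\mbb(\theta_{ti,m},f_l)$ and $\mbe_{lm}=\mbb(\theta_{it,m},f_l)\odot\mbb(\theta_{ri,m},f_l)$, this gives $h_{lm}=(\mbv_m^\T\mbd_{lm})(\mbv_m^\T\mbe_{lm})=\mbv_m^\T\mbd_{lm}\mbe_{lm}^\T\mbv_m$, a rank-one quadratic form in $\mbv_m$.

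Next I would assemble $\mbh$. Because the two scalar factors commute, $h_{lm}$ has two equivalent ``split'' representations, $h_{lm}=\bigl((\mbv_m^\T\mbd_{lm})\,\mbe_{lm}^\T\bigr)\mbv_m$ and $h_{lm}=\bigl((\mbv_m^\T\mbe_{lm})\,\mbd_{lm}^\T\bigr)\mbv_m$, where in each case the bracketed quantity is a row vector that depends complex-linearly on $\mbv_m$ (hence on $\mbv$). Placing that row in the block of columns indexed by IRS$_m$ (and zeros in the other blocks), and stacking over $l$ and $m$ in the same order used to build $\mbh$, defines two matrices $\mbQ_1(\mbv)$ and $\mbQ_2(\mbv)$, each linear in $\mbv$, with $\mbh=\mbQ_1(\mbv)\mbv=\mbQ_2(\mbv)\mbv$. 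Substituting into $f(\ba,\mbv)=\mbh^\H\mbW\mbh$ and using $\mbh^\H=\mbv^\H\mbQ_i(\mbv)^\H$ yields~\eqref{eq_neg10}; since $\mbQ_i(\mbv)$ is linear in $\mbv$, the right-hand side is a homogeneous quartic in $\mbv$, which is precisely the structure exploited by the PMLI update once $\mbQ_i(\mbv^{(t)})$ is frozen at the current iterate.

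I expect the only real obstacle here to be bookkeeping rather than analysis: one has to map the double index $(l,m)$ to the correct row of $\mbQ_i(\mbv)$ and the phases of IRS$_m$ to the correct column block, and check that this indexing is consistent with the ordering of $\mbh$ implicit in $\mbC$ and $\mbW$ from Lemma~\ref{lem:lem1} and the preceding proposition. It is also worth noting that the obstructed LoS term ($m=0$, $h_{l0}\approx0$) does not enter $\mbh$, so no special case is required there.
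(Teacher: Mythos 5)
Your proposal is correct and is essentially the paper's own argument: write each channel as the rank-one quadratic form $h_{lm}=\mbv_m^{\T}\mbS_{lm}\mbv_m$ with $\mbS_{lm}=\mbd_{lm}\mbe_{lm}^{\T}$ (your $\mbd_{lm},\mbe_{lm}$), pull out a linear-in-$\mbv$ factor in two symmetric ways to obtain $\mbh=\mbQ_1(\mbv)\mbv=\mbQ_2(\mbv)\mbv$, and substitute into $f(\ba,\mbv)=\mbh^{\H}\mbW\mbh$ from the preceding proposition. The only difference is mechanical: the paper reaches the two factorizations via $\Tr{\mbS_{lm}\mbv_m\mbv_m^{\T}}$ and the identity $\vec{\mbv_m\mbv_m^{\T}}=(\mbv_m\otimes\mbI_{N_m})\mbv_m=(\mbI_{N_m}\otimes\mbv_m)\mbv_m$, whereas you split the two scalar factors directly, and since $\vec{\mbS_{lm}}^{\T}(\mbv_m\otimes\mbI_{N_m})=(\mbv_m^{\T}\mbe_{lm})\mbd_{lm}^{\T}$ and $\vec{\mbS_{lm}}^{\T}(\mbI_{N_m}\otimes\mbv_m)=(\mbv_m^{\T}\mbd_{lm})\mbe_{lm}^{\T}$, your blocks coincide (up to swapping the labels $1$ and $2$) with the matrices defined in~\eqref{eq:4}.
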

where\par\noindent\small
\begin{align}\label{eq:4}
\mbQ_{1}(\mbv)&=\left[\begin{array}{c|c|c}(\mbS_1\mbQ(\mbv))^{\T}&\cdots& (\mbS_L\mbQ(\mbv))^{\T} \end{array}\right]^{\T},\nonumber\\
\mbQ_{2}(\mbv)&=\left[\begin{array}{c|c|c}(\mbS_1\mbQ^{\prime}(\mbv))^{\T}&\cdots&(\mbS_L\mbQ^{\prime}(\mbv))^{\T}\end{array} \right]^{\T},\nonumber\\
\mbQ(\mbv)&=\Diag{\mbv_1\otimes \mbI_{N_m},\ldots,\mbv_{_{M}}\otimes \mbI_{N_m}},\nonumber\\
\mbQ^{\prime}(\mbv)&= \Diag{  \mbI_{N_m}\otimes\mbv_1,\ldots,\mbI_{N_m}\otimes\mbv_{_M}},\nonumber\\
\mbS_{l}&=\Diag{\vec{\mbS_{l1}}^{\T},\ldots,\vec{\mbS_{lM}}^{\T}},\nonumber\\
\mbS_{lm}&=\left(\mbb(\theta_{ir,m},f_l)\odot \mbb(\theta_{ti,m},f_l)\right)\left(\mbb(\theta_{ri,m},f_l)\odot \mbb(\theta_{it,m},f_l)\right)^{\T}.
\end{align}\normalsize
\begin{proof} 
Given $\bPhi_m=\Diag{\mbv_m}$, it is  straightforward to verify  from~\eqref{eq:1} that  we  have  $h_{_{lm}}=\mbv_m^{\T}\mbS_{_{lm}} \mbv_m$ and \par\noindent\small
\begin{align}\label{eq:2}
\mbh_{l}&=[h_{_{l1}},\ldots,h_{_{lM}}]^{\T}=\left[\Tr{\mbS_{_{l1}}\mbv_{_1}\mbv_{_1}^{\T}},\ldots,\Tr{\mbS_{_{lM}}\mbv_{_M}\mbv_{_M}^{\T}}\right]^{\T},\nonumber\\
&= \Diag{\vec{\mbS_{_{l1}}}^{\T},\ldots,\vec{\mbS_{_{lM}}}^{\T}}\nonumber\\
&\qquad\times \left[\vec{\mbv_{_1}\mbv_{_1}^{\T}},\ldots,\vec{\mbv_{_M}\mbv_{_{M}}^{\T}}\right]^{\T}. 
\end{align}\normalsize
Applying the identity  \cite{van1996matrix}
\begin{equation}
\label{eq:vectorization}
\operatorname{vec}\left(\mbv_{m}\mbv^{\top}_{m}\right)=\left(\mbI_{N_m}\otimes\mbv_{m}\right)\mbv_{m}=\left(\mbv_{m}\otimes\mbI_{N_m}\right)\mbv_{m},  
\end{equation}
to~\eqref{eq:2} produces
\begin{align}\label{eq:3}
\mbh_{_{l}}=\mbS_{_{l}}  \mbQ(\mbv)\mbv=\mbS_{_{l}}\mbQ^{\prime}(\mbv)\mbv,
\end{align}
where $\mbS_{_l}$, $\mbQ(\mbv)$ and $\mbQ^{\prime}(\mbv)$ are given in~\eqref{eq:4}. 
Concatenating  the vectors in~\eqref{eq:3}, we obtain 
\begin{equation}
\mbh=\mbQ_{1}(\mbv)\mbv=\mbQ_{2}(\mbv)\mbv,   
\end{equation}
where $\mbQ_{1}(\mbv)$ and $\mbQ_{2}(\mbv)$ are  given in~\eqref{eq:4}.
\end{proof}
\vspace{-6pt}
\subsection{Proposed Algorithm}
We cyclically tackle the SNR maximization via its bi-quadratic transformation  with respect to auxiliary variables $\mbv_{(1)}$ and $\mbv_{(2)}$. In the sequel, $\mbv_{(1)},\mbv_{(2)}\in \mathbb{C}^{MN_m}$ are vectors produced by symmetrization, representing the collection of phase-shifts of all IRS platforms, whereas in the previous parts $\mbv_{1},\mbv_{2}\in \mathbb{C}^{N_m}$ were the vector of phase-shift of IRS$_1$ and IRS$_2$. 
In the following proposition, we  recast the SNR metric as a  bi-quadratic function of  $\mbv_{(1)}$ and $\mbv_{(2)}$. 
\begin{proposition}
\label{theo:1} The function \par\noindent\small
\begin{align}\label{eq:g}
g(\mbv_{_{(1)}},\mbv_{_{(2)}})&=\mbv_{_{(1)}}^{\H}\mbE(\mbv_{_{(2)}})\mbv_{_{(1)}},\nonumber\\
&=\mbv_{_{(2)}}^{\H}\mbE(\mbv_{_{_{(1)}}})\mbv_{_{(2)}},
\end{align}\normalsize
where \par\noindent\small
\begin{align}\label{eq:E(v)}
\mbE(\mbv)&=\frac{\mbG_1(\mbv)+\mbG_2(\mbv)}{2},\nonumber\\
\mbG_1(\mbv)&=\mbQ^{\H}_{1}(\mbv)\mbW\mbQ_{1}(\mbv),\nonumber\\
\mbG_2(\mbv)&=\mbQ^{\H}_{2}(\mbv)\mbW\mbQ_{2}(\mbv),
\end{align}\normalsize
is a bi-quadratic transformation of SNR metric in (\ref{eq_neg10}). 
\end{proposition}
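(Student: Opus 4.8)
The plan is to establish two facts: that $g$ is well defined (the two quadratic forms for $g(\mbv_{(1)},\mbv_{(2)})$ in~\eqref{eq:g} agree), and that it reduces to the SNR metric of~\eqref{eq_neg10} on the diagonal $\mbv_{(1)}=\mbv_{(2)}=\mbv$, so that $g$ --- which is then a Hermitian quadratic form in each argument with matrix depending quadratically on the other --- is genuinely a bi-quadratic transformation \emph{of} that metric. The diagonal consistency is immediate from Proposition~\ref{prop:2}: writing $\mbG_i(\mbv)=\mbQ_i^{\H}(\mbv)\mbW\mbQ_i(\mbv)$, Proposition~\ref{prop:2} gives $f(\ba,\mbv)=\mbv^{\H}\mbG_1(\mbv)\mbv=\mbv^{\H}\mbG_2(\mbv)\mbv$, whence $g(\mbv,\mbv)=\mbv^{\H}\mbE(\mbv)\mbv=\tfrac{1}{2}\bigl(\mbv^{\H}\mbG_1(\mbv)\mbv+\mbv^{\H}\mbG_2(\mbv)\mbv\bigr)=f(\ba,\mbv)$. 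So the real content is the symmetry $\mbv_{(1)}^{\H}\mbE(\mbv_{(2)})\mbv_{(1)}=\mbv_{(2)}^{\H}\mbE(\mbv_{(1)})\mbv_{(2)}$.

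For the symmetry I would first derive a bilinear identity for the building blocks $\mbQ_1,\mbQ_2$ evaluated at two distinct vectors. Partitioning $\mbv_{(1)},\mbv_{(2)}$ into per-IRS subvectors $\mbv_{(1),m},\mbv_{(2),m}\in\mathbb{C}^{N_m}$ and repeating the blockwise manipulations already used in the proof of Proposition~\ref{prop:2} --- exploiting the block-diagonal form of $\mbS_l$ in~\eqref{eq:4}, the Kronecker structure of $\mbQ(\cdot)$ and $\mbQ^{\prime}(\cdot)$, and the identities $(\mbv\otimes\mbI)\mbu=\vec{\mbu\mbv^{\T}}$, $(\mbI\otimes\mbv)\mbu=\vec{\mbv\mbu^{\T}}$, $\vec{\mbA}^{\T}\vec{\mbB}=\Tr{\mbA^{\T}\mbB}$ --- I expect to find that the $(l,m)$-th component of $\mbQ_1(\mbv_{(2)})\mbv_{(1)}$ equals $\mbv_{(1),m}^{\T}\mbS_{lm}\mbv_{(2),m}$ while that of $\mbQ_2(\mbv_{(2)})\mbv_{(1)}$ equals $\mbv_{(2),m}^{\T}\mbS_{lm}\mbv_{(1),m}=\mbv_{(1),m}^{\T}\mbS_{lm}^{\T}\mbv_{(2),m}$. (At $\mbv_{(1)}=\mbv_{(2)}=\mbv$ both collapse to $h_{lm}=\mbv_m^{\T}\mbS_{lm}\mbv_m$, since a scalar equals its transpose --- which is precisely why $\mbQ_1$ and $\mbQ_2$ produce the same $\mbh$ in Proposition~\ref{prop:2}.) Reading those two formulas with $\mbv_{(1)}$ and $\mbv_{(2)}$ interchanged then gives the key swap identities $\mbQ_1(\mbv_{(2)})\mbv_{(1)}=\mbQ_2(\mbv_{(1)})\mbv_{(2)}$ and $\mbQ_2(\mbv_{(2)})\mbv_{(1)}=\mbQ_1(\mbv_{(1)})\mbv_{(2)}$.

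With the swap identities in hand the argument closes in one line: putting $\mbu_i:=\mbQ_i(\mbv_{(2)})\mbv_{(1)}$, we have $\mbv_{(1)}^{\H}\mbE(\mbv_{(2)})\mbv_{(1)}=\tfrac{1}{2}\bigl(\mbu_1^{\H}\mbW\mbu_1+\mbu_2^{\H}\mbW\mbu_2\bigr)$, and substituting $\mbu_1=\mbQ_2(\mbv_{(1)})\mbv_{(2)}$, $\mbu_2=\mbQ_1(\mbv_{(1)})\mbv_{(2)}$ rewrites this as $\tfrac{1}{2}\bigl(\mbv_{(2)}^{\H}\mbG_2(\mbv_{(1)})\mbv_{(2)}+\mbv_{(2)}^{\H}\mbG_1(\mbv_{(1)})\mbv_{(2)}\bigr)=\mbv_{(2)}^{\H}\mbE(\mbv_{(1)})\mbv_{(2)}$; in other words, averaging $\mbG_1$ and $\mbG_2$ is exactly the symmetrization that makes the two expressions in~\eqref{eq:g} coincide. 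The only non-routine step is the blockwise evaluation of $\mbQ_1(\mbv_{(2)})\mbv_{(1)}$, and even that is mechanical once the $h_{lm}=\mbv_m^{\T}\mbS_{lm}\mbv_m$ bilinearization of Proposition~\ref{prop:2} is reused; the main obstacle is simply keeping the subcarrier index $l$ and the IRS index $m$ straight through the nested Kronecker products, the one conceptual point being that the ``transpose'' asymmetry distinguishing $\mbQ_1$ from $\mbQ_2$ is precisely what is undone by exchanging the two arguments. I would also note that $\mbW\succeq 0$, since $\mathcal{A}=(\mbP(\bnu)\mbP^{\H}(\bnu))^{\T}\otimes\bSigma^{-1}\succeq 0$, so each $\mbG_i(\mbv)$ and hence $\mbE(\mbv)$ is positive semidefinite --- which is what the subsequent PMLI iterations require for monotonicity.
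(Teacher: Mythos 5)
Your proof is correct and follows essentially the same route as the paper's: both rest on the swap identities $\mbQ_1(\mbv_{(2)})\mbv_{(1)}=\mbQ_2(\mbv_{(1)})\mbv_{(2)}$ (which the paper simply ``observes'' from~\eqref{eq:4} and~\eqref{eq:vectorization}, while you verify them blockwise via $h_{lm}=\mbv_m^{\T}\mbS_{lm}\mbv_m$), then conclude that averaging $\mbG_1$ and $\mbG_2$ symmetrizes the two quadratic forms, and finally check $f(\ba,\mbv)=g(\mbv,\mbv)$ on the diagonal exactly as in the paper. Your added remark on $\mbW\succeq 0$ is a harmless extra not needed for this proposition.
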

\begin{proof}
To show symmetry,  from~\eqref{eq:4} and~\eqref{eq:vectorization}, we observe that $\mbQ_1(\mbv_{_{(i)}})\mbv_{_{(k)}}=\mbQ_2(\mbv_{_{(k)}})\mbv_{_{(i)}}$ for $ i\neq k \in \left\{1,2\right\}$. Therefore,\par\noindent\small
\begin{align}
&g \left (\mbv_{_{(1)}},\mbv_{_{(2)}} \right)= \mbv_{_{(1)}}^{\H}\mbE\left (\mbv_{_{(2)}}\right )\mbv_{_{(1)}}\nonumber\\
&=\mbv_{_{(1)}}^{\H}\frac{\mbG_1 \left(\mbv_{_{(2)}}\right)+\mbG_2\left(\mbv_{_{(2)}}\right)}{2}\mbv_{_{(1)}}\nonumber\\
&=\mbv_{_{(1)}}^{\H}\frac{\mbQ^{\H}_{1}\left(\mbv_{_{(2)}}\right)\mbW\mbQ_{1}\left(\mbv_{_{(2)}}\right)+\mbQ^{\H}_{2}\left(\mbv_{_{(2)}}\right)\mbW\mbQ_{2}\left(\mbv_{_{(2)}}\right)}{2}\mbv_{_{(1)}}\nonumber\\
&=\mbv_{_{(2)}}^{\H}\frac{\mbQ^{\H}_{2}\left(\mbv_{_{(1)}}\right)\mbW\mbQ_{2}\left(\mbv_{_{(1)}}\right)+\mbQ^{\H}_{1}\left(\mbv_{_{(1)}}\right)\mbW\mbQ_{1}\left(\mbv_{_{(1)}}\right)}{2}\mbv_{_{(2)}}\nonumber\\
&=\mbv_{_{(2)}}^{\H}\frac{\mbG_2\left(\mbv_{_{(1)}}\right)+\mbG_1\left(\mbv_{_{(1)}}\right)}{2}\mbv_{_{(2)}}\nonumber\\
&=\mbv_{_{(2)}}^{\H}\mbE\left(\mbv_{_{(1)}}\right)\mbv_{_{(2)}}=g\left(\mbv_{_{(2)}},\mbv_{_{(1)}}\right).
\end{align}\normalsize
By Substituting $\mbv_{_{(1)}}=\mbv_{_{(2)}}=\mbv$ and ~\eqref{eq:E(v)} in~\eqref{eq:g}, and comparing it with~\eqref{eq_neg10},  one can verify that $f(\ba,\mbv)=g(\mbv,\mbv)$. 
\end{proof}
Since $f(\ba,\mbv)=g(\mbv,\mbv)$, we propose to maximize $f(\ba,\mbv)$ by  alternately fixing  $\mbv_{(1)}$ or $\mbv_{(2)}$ and maximizing $g\left(\mbv_{_{(1)}},\mbv_{_{(2)}}\right)$ with respect to the other variable while enforcing  $\mbv_{_{(1)}}=\mbv_{_{(2)}}$  as a constraint. From Proposition~\ref{theo:1}, fixing either $\mbv_{_{(1)}}$ or $\mbv_{_{(2)}}$ and maximizing $g\left(\mbv_{_{(1)}},\mbv_{_{(2)}}\right)$  with respect to the other variable requires solving the following \par\noindent\small
\begin{align}
\label{eq:UQP1}
\underset{\mbv_{(k)}\in\Omega^{MN_{m}}}
{\textrm{maximize}} \quad \mbv_{(k)}^{\mathrm{H}} \mbE \left(\mbv_{(i)}\right) \mbv_{(k)}, \quad i\neq k \in \left\{1,2\right\},
\end{align}\normalsize
\begin{remark}\label{rmk:1} 
In UQP, the diagonal loading technique is  used to ensure the positive semidefiniteness of the matrix, without changing the optimal solution~\cite{esmaeilbeig2022joint}.
In~\eqref{eq:UQP1},  the diagonal loading  as
 $\widetilde{\mbE}(\mbv) \leftarrow \lambda_{m} \mbI - \mbE(\mbv)$, with  
$\lambda_{_{m}}$  being the maximum eigenvalue of $\mbE(\mbv)$,  results is an equivalent problem. 

Note that diagonal loading has no effect on the solution of (\ref{eq:UQP1}) because $\mbv^{\mathrm{H}} \widetilde{\mbE}(\mbv) \mbv=\lambda_{_{m}}MN_m- \mbv^{\mathrm{H}} \mbE(\mbv) \mbv$. The  equivalent problem to~\eqref{eq:UQP1} is\par\noindent\small
\begin{align}\label{eq:UQP2}
\underset{\mbv_{(k)}\in\Omega^{MN_{m}}}
{\textrm{minimize}} \quad \mbv_{(k)}^{\mathrm{H}} \widetilde{\mbE}\left(\mbv_{(i)}\right) \mbv_{(k)}, \quad i\neq k \in \left\{1,2\right\}.
\end{align}\normalsize
\end{remark}

The following theorem demonstrates that the IRS beamforming design problem  $\mathcal{P}_2$  is equivalent to a unimodular bi-quadratic programming (UBQP)  that we solve using the PMLI approach in~\eqref{eq:UQP_it}. 
\begin{theorem}
\label{theo:2} 
The SNR maximization problem $\mathcal{P}_2$ with respect to phase-shifts  is equivalent to the following UBQP\par\noindent\small 
\begin{align}\label{eq:8}
\underset{\mbv_{_{(k)}}\in\Omega^{MN_{m}}}
{\textrm{maximize}}& \quad
\begin{pmatrix} \mbv_{(k)}\\1\end{pmatrix}^{\H}
 \underbrace{\begin{pmatrix} \widehat\lambda_{m}\mbI-\widetilde{\mbE}(\mbv_{_{(i)}})& \eta\mbv_{_{(i)}} \\ \eta \mbv_{_{(i)}}^{\mathrm{H}} & \widehat\lambda_{m}-2\eta M N_{m} \end{pmatrix}}_{=\widehat{\mbE}\left(\mbv_{_{(i)}}\right)}
 \begin{pmatrix} \mbv_{(k)}\\1\end{pmatrix},
\end{align}\normalsize
where $i\neq k \in \left\{1,2\right\}$, and $\widehat\lambda_{m}$ is the maximum eigenvalue of $\mathcal{E}(\mbv_{_{(i)}})$ as defined in~\eqref{eq:9}. 
\end{theorem}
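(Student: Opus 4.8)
The plan is to show that, inside the alternating scheme, each step of $\mathcal{P}_2$ — fixing one auxiliary variable and optimizing the other — reduces, after a quadratic consensus penalty and two diagonal loadings, to exactly the UBQP in~\eqref{eq:8}, which the PMLI recursion~\eqref{eq:UQP_it} can then be applied to. First I would invoke Proposition~\ref{theo:1}: since $f(\ba,\mbv)=g(\mbv,\mbv)$ and $g$ is bi-quadratic, maximizing $f$ over $\mbv$ amounts to alternately fixing $\mbv_{_{(i)}}$ and maximizing $g(\mbv_{_{(1)}},\mbv_{_{(2)}})=\mbv_{_{(k)}}^{\H}\mbE(\mbv_{_{(i)}})\mbv_{_{(k)}}$ over $\mbv_{_{(k)}}\in\Omega^{MN_m}$, i.e. problem~\eqref{eq:UQP1}, subject to $\mbv_{_{(1)}}=\mbv_{_{(2)}}$ so that the surrogate coincides with $f$ at convergence.

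Next I would replace the hard consensus constraint by a quadratic penalty. For unimodular $\mbv_{_{(k)}},\mbv_{_{(i)}}$ one has $\|\mbv_{_{(k)}}-\mbv_{_{(i)}}\|_2^2 = 2MN_m - 2\Re{\mbv_{_{(i)}}^{\H}\mbv_{_{(k)}}}$, so adding $-\eta\|\mbv_{_{(k)}}-\mbv_{_{(i)}}\|_2^2$ (with $\eta>0$ a weight increased across outer iterations) to the objective introduces only a term linear in $\mbv_{_{(k)}}$ while driving $\mbv_{_{(k)}}\to\mbv_{_{(i)}}$ for large $\eta$. Applying the diagonal loading of Remark~\ref{rmk:1}, $\widetilde{\mbE}(\mbv_{_{(i)}})=\lambda_m\mbI-\mbE(\mbv_{_{(i)}})$, converts this penalized maximization into the minimization of $\mbv_{_{(k)}}^{\H}\widetilde{\mbE}(\mbv_{_{(i)}})\mbv_{_{(k)}}+\eta\|\mbv_{_{(k)}}-\mbv_{_{(i)}}\|_2^2$, the loading contributing only the constant $\lambda_m MN_m$ since $\mbv_{_{(k)}}^{\H}\mbv_{_{(k)}}=MN_m$. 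I then homogenize with $\bar\mbv_{_{(k)}}=[\mbv_{_{(k)}}^{\T},1]^{\T}$ and verify by direct expansion that this objective equals $\bar\mbv_{_{(k)}}^{\H}\mathcal{E}(\mbv_{_{(i)}})\bar\mbv_{_{(k)}}$ with
\begin{equation}
\mathcal{E}(\mbv_{_{(i)}})=\begin{bmatrix}\widetilde{\mbE}(\mbv_{_{(i)}}) & -\eta\mbv_{_{(i)}}\\ -\eta\mbv_{_{(i)}}^{\H} & 2\eta MN_m\end{bmatrix},
\end{equation}
i.e. the matrix $\mathcal{E}(\mbv_{_{(i)}})$ of~\eqref{eq:9}. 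Because $\|\bar\mbv_{_{(k)}}\|_2^2=MN_m+1$ is constant on the feasible set, minimizing $\bar\mbv_{_{(k)}}^{\H}\mathcal{E}(\mbv_{_{(i)}})\bar\mbv_{_{(k)}}$ is equivalent to maximizing $\bar\mbv_{_{(k)}}^{\H}\big(\widehat\lambda_m\mbI-\mathcal{E}(\mbv_{_{(i)}})\big)\bar\mbv_{_{(k)}}$, and expanding $\widehat\lambda_m\mbI_{MN_m+1}-\mathcal{E}(\mbv_{_{(i)}})$ block-wise reproduces exactly $\widehat{\mbE}(\mbv_{_{(i)}})$ of~\eqref{eq:8}; choosing $\widehat\lambda_m=\lambda_{\max}\!\big(\mathcal{E}(\mbv_{_{(i)}})\big)$ makes $\widehat{\mbE}(\mbv_{_{(i)}})\succeq\mathbf{0}$, which is precisely the hypothesis under which~\eqref{eq:UQP_it} monotonically increases the UQP objective.

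The step I expect to need the most care is the bookkeeping of the appended coordinate: PMLI optimizes $\bar\mbv_{_{(k)}}$ freely over $\Omega^{MN_m+1}$, whereas we want its last entry to equal $1$. Here I would use the invariance of the Hermitian form $\bar\mbv^{\H}\widehat{\mbE}(\mbv_{_{(i)}})\bar\mbv$ under a global phase: if $\bar\mbv^{\star}=[\mbu^{\star\T},c^{\star}]^{\T}$ with $|c^{\star}|=1$ is a maximizer, so is $e^{-\j\arg{c^{\star}}}\bar\mbv^{\star}$, which has unit last entry; hence the homogenization is lossless and $\mbv_{_{(k)}}=e^{-\j\arg{c^{\star}}}\mbu^{\star}$ solves the penalized problem~\eqref{eq:UQP2}, and therefore — in the large-$\eta$ regime enforcing $\mbv_{_{(1)}}=\mbv_{_{(2)}}$ — the original $\mathcal{P}_2$. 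Chaining these equivalences (penalty reformulation, Remark~\ref{rmk:1} loading, homogenization, and the second loading for positive semidefiniteness) establishes that $\mathcal{P}_2$ is equivalent to the UBQP~\eqref{eq:8}.
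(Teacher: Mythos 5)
Your proof is correct and follows essentially the same route as the paper: invoke Proposition~\ref{theo:1} and Remark~\ref{rmk:1}, replace the consensus constraint $\mbv_{(1)}=\mbv_{(2)}$ by the $\eta$-weighted $\ell_2$ penalty, homogenize with the appended unit entry to obtain $\mathcal{E}(\mbv_{(i)})$, and apply a second diagonal loading with $\widehat\lambda_m$ to arrive at the UBQP~\eqref{eq:8}. Two small bonuses of your write-up: the off-diagonal block $-\eta\mbv_{(i)}$ in your $\mathcal{E}(\mbv_{(i)})$ is the sign actually consistent with the expansion $-2\eta\Re{\mbv_{(k)}^{\H}\mbv_{(i)}}$ and with the $+\eta\mbv_{(i)}$ blocks of $\widehat{\mbE}(\mbv_{(i)})$ in~\eqref{eq:8} (the paper's intermediate displays~\eqref{eq:7} and~\eqref{eq:9} carry a sign typo there), and your global-phase argument that a maximizer's last coordinate can be rotated to $1$ makes explicit why the homogenization is lossless, a point the paper leaves implicit.
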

\begin{proof} From Proposition~\ref{theo:1} and Remark~\ref{rmk:1}, we  know that  $\mathcal{P}_2$ is   equivalent to  the following problems, 
\begin{align}\label{eq:5}
&\underset{\mbv_{(k)}\in\Omega^{MN_{m}}}
{\textrm{minimize}} \quad \mbv_{(k)}^{\mathrm{H}} \widetilde{\mbE}(\mbv_{(i)}) \mbv_{(k)}, \quad i\neq k \in \left\{1,2\right\},\nonumber\\
&\textrm{subject to} \quad \mbv_{_{(i)}}=\mbv_{_{(k)}}.
\end{align}
We add the $\ell_2$-norm penalty term between $\mbv_{_{(1)}}$ and $\mbv_{_{(2)}}$ as a \emph{penalty} function to~\eqref{eq:5}, which yields\par\noindent\small
\begin{equation}\label{eq:6}
\underset{\mbv_{_{(k)}}\in\Omega^{MN_{m}}}
{\textrm{minimize}} \quad \mbv_{_{(k)}}^{\H} \widetilde{\mbE}(\mbv_{_{(i)}}) \mbv_{_{(k)}} + \eta \|\mbv_{_{(i)}} - \mbv_{_{(k)}}\|_2^2 , \quad i\neq k \in \left\{1,2\right\},
\end{equation}\normalsize
where $\eta$ is a  Lagrangian multiplier. The regularizer as well as the main objective are quadratic in  $\mbv_{_{(1)}}$ and $\mbv_{_{(2)}}$. Consequently, we recast the objective of~\eqref{eq:6} as\par\noindent\small
\begin{align}\label{eq:7}
 &\mbv_{_{(k)}}^{\H} \widetilde{\mbE}(\mbv_{_{(i)}}) \mbv_{_{(k)}} + \eta \|\mbv_{_{(i)}} - \mbv_{_{(k)}}\|_2^2, \nonumber \\
 &=\mbv_{_{(k)}}^{\H} \widetilde{\mbE}(\mbv_{_{(i)}}) \mbv_{_{(k)}} -2\eta \Re{\mbv_{_{(k)}}^{\H}\mbv_{_{(i)}}}+2\eta MN_m, \nonumber\\
 &=\begin{pmatrix}\mbv_{_{(k)}} \\ 1\end{pmatrix}^{\H}
 \begin{pmatrix} \widetilde{\mbE}(\mbv_{_{(i)}}) & \eta \mbv_{_{(i)}} \\ \eta \mbv_{_{(i)}}^{\mathrm{H}} & 2\eta M N_{m} \end{pmatrix}
 \begin{pmatrix}\mbv_{_{(k)}} \\ 1\end{pmatrix},
\end{align}\normalsize
where in the first equality, we used  $\|\mbv_{_{(k)}}\|_2^2=MN_m$, due to unimodularity of  $\mbv_{_{(k)}}$. Substituting~\eqref{eq:7} in~\eqref{eq:6} yields
\begin{align}\label{eq:9}
\underset{\mbv_{_{(k)}}\in\Omega^{MN_{m}}}
{\textrm{minimize}} &\quad \begin{pmatrix} \mbv_{(k)}\\1\end{pmatrix}^{\H}
 \underbrace{\begin{pmatrix} \widetilde{\mbE}(\mbv_{_{(i)}}) & \eta \mbv_{_{(i)}} \\ \eta \mbv_{_{(i)}}^{\mathrm{H}} & 2\eta M N_{m} \end{pmatrix}}_{=\mathcal{E}(\mbv_{_{(i)}})}
 \begin{pmatrix} \mbv_{(k)}\\1\end{pmatrix}, \nonumber\\&\quad i\neq k \in \left\{1,2\right\}.    
\end{align}
We use  diagonal loading as introduced in  Remark~\ref{rmk:1} to obtain~\eqref{eq:8}.
\end{proof}

Based on Theorem~\ref{theo:2}, the  IRS phase assignment problem can be formulated as a UBQP   and therefore it may be tackled in an alternating  manner over $\bar{\mbv}_{_{(1)}}=[\mbv^{\T}_{_{(1)}} 1]^{\T}$ and $\bar{\mbv}_{_{(2)}}=[\mbv^{\T}_{_{(2)}} 1]^{\T}$, by  the PMLI iterations 
in \eqref{eq:UQP_it}.
The PMLI has been shown to be convergent in terms of both the optimization objective and variable\cite{soltanalian2013joint,hu2016locating}. Algorithm~\ref{algorithm_1} summarizes the steps for joint waveform and IRS phase-shift design.
\begin{algorithm}[H]
\caption{Joint IRS phase-shift and OFDM waveform design}
    \label{algorithm_1}
    \begin{algorithmic}[1]
    \Statex \textbf{Input:} 
    Initialization values $\mbv^{(0)}_{1}$,$\mbv^{(0)}_{2}$, $\ba^{(0)}$, the Lagrangian multiplier $\eta$, total number of iterations $\Gamma_{1}$ ($\Gamma_{2}$) for the problem $\mathcal{P}_1$ ($\mathcal{P}_2$). 
    \vspace{1.2mm}
    \Statex \textbf{Output:} Optimized phase-shifts $\mbv^{\star}$ and  OFDM signal coefficients $\ba^{\star}$.
    \vspace{1.2mm}
    \For{$s=0:\Gamma_{1}-1$} \hspace{4mm}$\triangleright$ $\mbv^{(t)}_{(1)}$ and $\mbv^{(t)}_{(2)}$ are the solutions at the $t$-th iteration. 
     \vspace{1.2mm}
    \For{$t=0:\Gamma_{2}-1$} 
    \vspace{1.2mm}\small
     \State  $\mbv^{(t+1)}_{(1)}\gets e^{\j \arg{\begin{bmatrix} \mbI_{_{M N_{m}}} \mathbf{0}_{_{M N_{m}}}\end{bmatrix}\left(\widehat{\lambda}_{m}\mbI-\mathcal{E}(\mbv^{(t+1)}_{(2)},\ba^{(s)})\right)\bar{\mbv}^{(t)}_{(1)}}}$.
     \vspace{1.2mm}
    \State  $\mbv^{(t+1)}_{(2)}\gets e^{\j \arg{\begin{bmatrix} \mbI_{_{M N_{m}}} \mathbf{0}_{_{M N_{m}}}\end{bmatrix}\left(\widehat{\lambda}_{m}\mbI-\mathcal{E}(\mbv^{(t)}_{(1)},\ba^{(s)})\right)\bar{\mbv}^{(t)}_{(2)}}}$,

     \vspace{1.2mm}\normalsize
    \EndFor
    \State $\mbv^{(s)}\gets \mbv^{(\Gamma_{2})}_{(1)}$ or $\mbv^{(\Gamma_{2})}_{(2)}$.
    \vspace{1.2mm}
    \State $\mbh^{(s)} \gets   \mbQ_{1}(\mbv^{(s)})\mbv^{(s)} $.
    \vspace{1.2mm}
    \State Update $\mbX^{(s)}$ according to~\eqref{eq:11}-\eqref{eq:14}.
    \vspace{1.2mm}
    \State  Update $\ba^{(s)}$ according to~\eqref{Neg_115}. 
    \EndFor
    \vspace{1.2mm}
    \State \Return $\left\{ \ba^{\star},\mbv^{\star}\right\} \gets \left \{ \ba^{(\Gamma_1)},\mbv^{(\Gamma_1)}\right\}$.
    \end{algorithmic}
\end{algorithm}\normalsize
\section{Numerical analysis}\label{sec:6}
We performed numerical experiments to analyze the performance of  Algorithm~\ref{algorithm_1} through the RoC of the proposed detector. The radar located at $\uprho_r=[0,0]^{\T}$ was set to transmit $N=50$ pulses  with pulse-width $T=50 $ ns, carrier frequency $f_c=1$ GHz, bandwidth $B=100$ MHz, and PRI $T_{_{\PRI}}=20 \mu$s. The OFDM signal had $L=4$ subcarriers with spacing $\Delta_f=1/T=20$ MHz. The IRS$_1$ was located at $\uprho_i^{(1)}=[0.1,0.1]^{\T}$km and IRS$_2$ is located at $\uprho_i^{(2)}=[-0.1,0.1]^{\T}$km. Each IRS comprises $N_m=8$ elements arranged as a ULA. The target was located at $\uprho_{_t}=[0,5]^{\T}$ km moving with  $\upnu=[10,10]^{T}$m/s. The complex   target  reflectivity  coefficients  $\alpha_{_{lm}}$ corresponding to a Swerling-0 target model were drawn from $\mathcal{CN}(0,1)$. 

We compared multi-IRS-aided OFDM radar with a multipath OFDM radar~\cite{sen2010adaptive}. Fig.~\ref{fig_N} illustrates the RoC obtained after $10^3$ Monte-Carlo trials for fixed probability of false alarm $P_{_{\textrm{FA}}}$. In each trial, the  threshold $\gamma$  is set according to \eqref{eq:probab} for the desired $P_{_{\textrm{FA}}}$. We observe that deploying $M=1$ IRS platforms with phase-shifts obtained by Algorithm 1 improves $P_{_\textrm{D}}$ 
over the non-IRS (LoS) OFDM radar. Moreover, for $M=2$, IRS-aided  outperforms multipath  OFDM radar proposed in~\cite{sen2010adaptive}  with specular reflection $\{h_{_{lm}}\}=1$ in the exactly identical two paths between the target and radar.   Deploying multiple IRS platforms  in comparison with  single  IRS and non-IRS scenarios provides additional degrees of freedom (DoFs) and  improves  performance. Also, multi-IRS-aided radar  outperforms single IRS because an optimal deployment of more IRSs provides more NLoS paths and, hence, enhanced detection of NLoS targets especially those that may not be accessible via only one IRS.
\begin{figure}[t]
\centering
	\includegraphics[width=1.0\columnwidth]{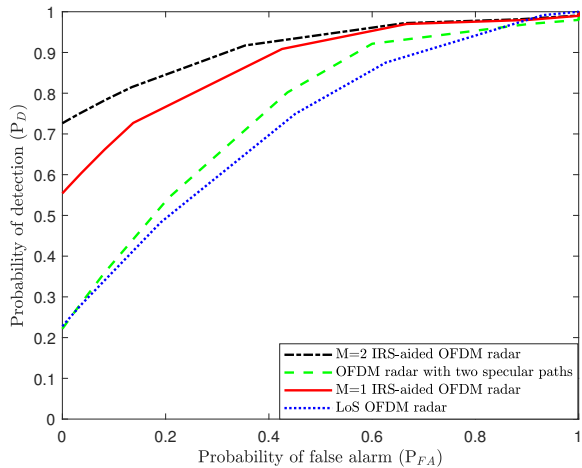}
	\caption{RoC of   detection for LoS OFDM radar (single path), OFDM radar  with 2 specular paths, $M=1$ IRS-aided OFDM radar,   $M=2$ IRS-aided OFDM radar. 
 }
\label{fig_N}
\end{figure}
\vspace{-6pt}\section{Summary}\label{sec:7}
We investigated  the  moving target detection problem  using a multi-IRS-aided OFDM radar. The IRS phase-shifts along with the  OFDM transmit signal coefficients were designed by taking advantage of an alternating optimization of the non-centrality parameter of the GLRT. We  showed that maximizing the non-centrality parameter improved the probability of detection. By means of numerical investigations, we demonstrated that our proposed method enhances the $P_{_{\textrm{D}}}$ over the non-IRS OFDM  radar systems. 
\bibliographystyle{IEEEtran}
\bibliography{ref}
\end{document}